\newcommand{\C}{{\mathbb{C}}}
\newcommand{\R}{{\mathbb{R}}}
\newcommand{\ii}{\mathrm{i}}
\newcommand{\e}{\mathrm{e}}
\theoremstyle{plain}
\newtheorem{theorem}{Theorem}[section]
\newtheorem{lemma}[theorem]{Lemma}
\begin{document}

%%%%%%%%%%%%%%%%%%%%%%%%%%%%%%%%%%%%%%%%%%%%%%%%%%%%%%%%%%%%%%%%%%%%%%%%%%%%%%

\title{Transport properties in directed Quantum Walks on the line}

\author[4]{Rodrigo Chaves\footnote{Corresponding author: rodchaves@ufmg.br}}
\author[3]{Jaime Santos}
\author[1,2]{Bruno Chagas}

\affil[1]{National University of Ireland Galway, Ireland}
\affil[2]{Irish Centre for High-End Computing, Ireland}
\affil[3]{Haslab, INESC TEC, Portugal}
\affil[4]{Universidade Federal de Minas Gerais, Brazil}

\date{\today}
\maketitle

%%%%%%%%%%%%%%%%%%%%%%%%%%%%%%%%%%%%%%%%%%%%%%%%%%%%%%%%%%%%%%%%%%%%%%%%%%%%%%
\begin{abstract}

We obtained analytical expressions considering a directed continuous-time quantum walk on a directed infinite line using Bessel functions, expanding previous results in the literature, for a general initial condition. We derive the equation for the probability distribution, and show how to recover normal and enhanced decay rates for the survival probability by adjusting the phase factor of the direction of the graph. Our result shows that the mean and standard deviation for a specific non-local initial condition does not depend on the direction.
\end{abstract}

\begin{center}
\textbf{Keywords:} - \textit{Quantum Walks; Directed Graphs.}
\end{center}

%%%--------- S1: Introduction
\section{Introduction}\label{s:intro}

Random walks have been a very fruitful topic both in physics and computer science. Naturally, a quantum mechanical analog would be defined to investigate its usefulness and to check for discrepancies between the classical and quantum models. Two approaches are possible: continuous-time quantum walks and discrete-time quantum walks; the former will be the focus of this work. Continuous-time quantum walks (CTQW) resulted in numerous applications since it was defined. Farhi and Gutmann \cite{Farhi_1998} created an algorithm that uses CTQW for decision trees. They proved that the algorithm could get polynomial-time solutions for graphs that take exponential time to solve classically. Then, Childs \textit{et al.} \cite{Childs_2002} showed the differences between random walks and CTQW, with the latter having an exponentially faster propagation between nodes than its classical counterpart. The list of applications range from search algorithms \cite{Childs_2004}, universal computation \cite{Childs_2009}, state transfer \cite{Kendon2011, Godsil2012}, transport efficiency \cite{Razzoli2021}, and simulation of many-body systems \cite{lahini2011}.

Directed graphs with complex roots of unity associated with their arcs, also called complex unit gain graphs, have complex Hermitian matrices as adjacency matrices. Recently, there is an increasing interest in them and how those type of directions changes properties in quantum walks. From a graph-theoretical perspective, there are papers associating the combinatorics of arc directions with spectral properties of the complex Hermitian matrices \cite{Mohar2016, Guo2016}. From a dynamics perspective, literature has shown that transport properties for those graphs differ from the undirected case: state transfer for them demands a different characterization \cite{Godsil2020}, new phenomena are possible like zero transfer \cite{Sett2019}, and the dynamics immutability by the addition of direction for certain initial conditions \cite{chaves2022}. 

Another interesting phenomenon in literature \cite{Anderson1958}, and the focus of this work, is localization. In quantum walks, for certain graphs, the walker takes more time to visit all nodes and the time rate for the spreading of the walker is called survival probability. One can induce localization with Hamiltonians perturbations \cite{Candeloro2020}, by adding an oracle in the search problem, in a Bose-Einstein condensate experiment \cite{Delvecchio2020}, or by introducing a spatial inhomogeneity in the coin operators \cite{buarque2019}. Localization is also sensitive to the initial condition of the walk \cite{Abal2006}, where it was observed different spreading rates when the initial condition was changed. However, it also appears for some quantum walk models without disorder \cite{Danac2021}. Our purpose in this work is to analyze localization in an infinite line with directions associated with complex numbers. This will be done analytically by showing a connection between the time evolution of the quantum state and the Bessel functions with general initial conditions.

The paper is organized as follows. In section \ref{s:tb}, we briefly introduce our main tools serving as a theoretical background to the reader: Hermitian oriented graphs, Bessel functions, and CTQW. In section \ref{s:tpa}, we prove the connection between Bessel functions and the time evolution of the quantum state in an infinite line. Also, we analyze the survival probability, moments, and standard deviation of the walker. The paper ends in section \ref{s:c} with the final discussion of the work done and the conclusions drawn from it.

\section{Theoretical Background}\label{s:tb}

In this paper, we will see the definition of the tools that were used to obtain our results. At first, in subsection \ref{ss:og}, we will present Hermitian directed graphs and how it is defined. Following, in subsection \ref{ss:ctqw}, we will see what are CTQW and examples of the dynamics for an infinite line with direction. Then, in section \ref{ss:bf}, we will give a brief description of Bessel functions and some of their properties.

%\textcolor{blue}{
%Nessa parte inicial seria bom dar um panorama da seção.
%Acredito que nenhuma dessas subseções precisa ser muito longa. Pode ser direto e com apenas as definições que precisamos, mas tentando sempre ilustrar com boas figuras e gráficos.
%}
\subsection{Hermitian Directed Graphs}\label{ss:og}

A graph $G$ is a set of nodes $V(G)$ connected by its edge set $E(G)$, and can be represented by a Laplacian matrix from a given graph or by an adjacency matrix $A(G)$. The latter is defined as
\begin{equation}
    A_{ab}=\begin{cases}
    w_{ab}&\text{if $(a,b)\in E(G)$}\\
    0&\text{otherwise}
    \end{cases}.
\end{equation}
When $w_{ab}\in \R$ we say the graph is undirected and the directed case occurs when $w_{ab}\in\C$. The Laplacian matrix is defined similarly, but it has every diagonal entry $A_{aa}$ equal to the degree of the node $a$.

Quantum mechanics demands matrices being Hermitian when considering an isolated system. If there is an arc $(a,b)$ with weight $w_{ab}$, there will be an arc $(b,a)$ with weight $w^*_{ba}$. Additionally, our work will deal with weights that have $\abs{w_{ab}}=1$. Therefore, we can define the Hamiltonian of Hermitian directed graphs as
\begin{equation}\label{eq:alphaadj}
H = \sum_{(a,b) \in E(G)} \e^{\ii \alpha(a,b)}\ketbra{a}{b} + \e^{-\ii \alpha(b,a)}\ketbra{b}{a}, 
\end{equation}
where $\alpha(a,b)\in\R$. Even though $\alpha(a,b)$ can be different for every pair of arcs $(a,b)$ and $(b,a)$, in this work, we use a constant $\alpha$ for all pair of arcs. In this scenario, one can obtain an undirected graph if one sets $\alpha = 0$.  Figure \ref{fig:oriented_line} gives an example of a directed infinite path graph.

Throughout this work, we will study the case of a directed infinite line, which can be defined by the Hamiltonian
\begin{equation}\label{hamiltonian}
    H = \sum_{x = L}^{R}\e^{\ii\alpha}\ketbra{x+1}{x}+\e^{-\ii\alpha}\ketbra{x-1}{x},
\end{equation}
where $L$ and $R$ are the left and right borders of the line where we can obtain an infinite path graph if we set $R\rightarrow\infty$ and $L\rightarrow -\infty$. From now on, since our interest lies on infinite lines and the dynamics associated with it, the reader may assume our Hamiltonian is equal to the one in equation \ref{hamiltonian} unless stated otherwise.

% \begin{figure}[!h]
% \centering
% \includegraphics[width=6cm]{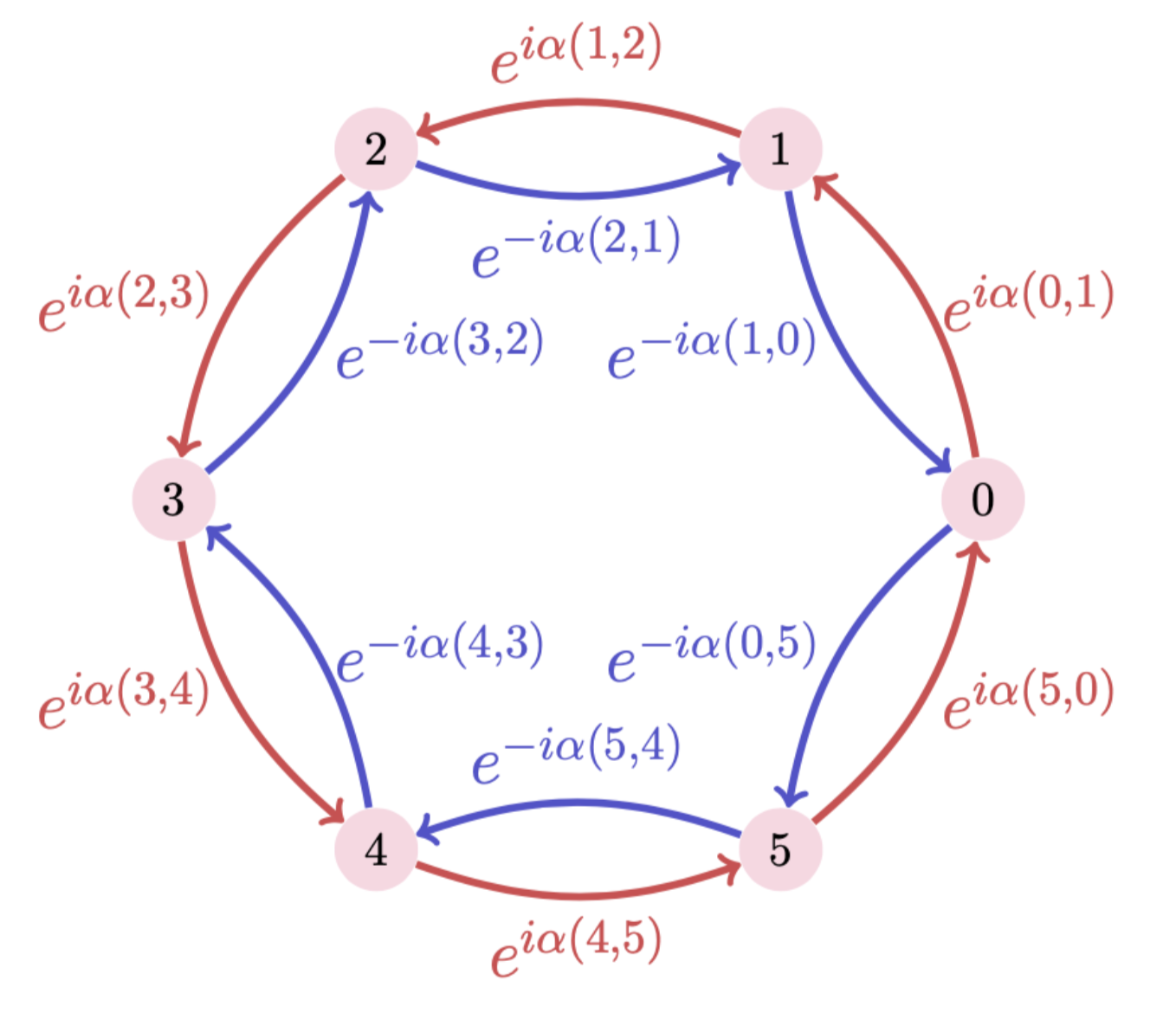}
% \caption{Oriented cycle graph with six vertices}
% \label{fig:oriented_cycle_6}
% \end{figure}

\begin{figure}[!h]
\centering
\includegraphics[width=12cm]{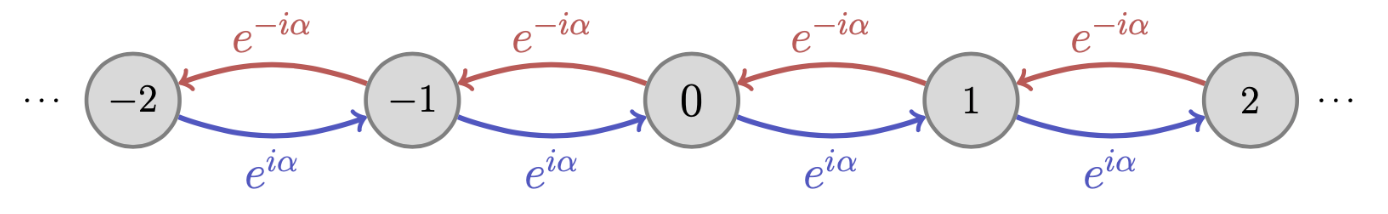}
\caption{Directed infinite line with weights $e^{\ii\alpha}$ and $e^{-\ii\alpha}$.}
\label{fig:oriented_line}
\end{figure}

%\textcolor{blue}{
%\begin{enumerate}
%    \item Definição da hermitean oriented graphs, fazendo algum comentário de como recuperar um grafo simples, alguma conexão com Time-Reversal Symmetry Breaking \cite{Zimbors2013}
%    \item Definir o caso da linha infinita e deixar explícito qual a orientação que vamos usar, e que não é a única possível
%    \item Gráfico da linha infinita orientada
%\end{enumerate}
%}

\subsection{Continuous-Time Quantum Walk}\label{ss:ctqw}

A Hamiltonian $H$ is a self-adjoint operator associated to a Hilbert space. The time evolution of a quantum state, $\ket{\psi(t)}$, in this Hilbert space is governed by Schrödinger's equation
\begin{equation}
    \ii\frac{\partial}{\partial t}\ket{\psi(t)}=H\ket{\psi(t)},
\end{equation}
considering the Planck constant $\hbar=1$. Given an initial condition $\ket{\psi(0)}$ and a time-independent Hamiltonian $H$, the solution to the differential equation will be
\begin{equation}
    \ket{\psi(t)}=\e^{-\ii tH}\ket{\psi(0)}.
\end{equation}
If the Hilbert space is $\C^{2^n}$, where $n$ is the number of particles with two-dimensional states, it is possible to define a $H$, like the Heisenberg Hamiltonian, that has a block decomposition. Each block
is invariant in a $k$-dimensional subspace, where $k$ is the number of $\ket{1}$ in the state. Then, $H$ will have $n+1$ subspaces each with dimension $\binom{n}{k}$ \cite{Osborne2006}. Whenever $k=1$, the block corresponding to this subspace can be described by the adjacency matrix (or the Laplacian matrix depending on the choice of $H$) of the underlying graph that describes the couplings. This description applies for both undirected \cite{Christandl2005} graphs or unitary gain graphs \cite{chaves2022}. 

As an example of the dynamics of the Hamiltonian in equation \ref{hamiltonian}, figure \ref{fig:dynamics} describes the dynamics for an infinite line within a frame of 200 nodes with initial condition
\begin{equation}\label{eq:initialcondition}
    \ket{\psi(0)}=\frac{1}{\sqrt{2}}(\ket{-k}-\ket{k}),
\end{equation}
where $k=3$ in this case and $t=35$. Notice how the value of $\alpha$ changes the dynamics of the graph. While an undirected graph has symmetry around the $0$ node, different values of $\alpha$ can increase the probability of finding the walker on one side or another. This was the main inspiration to seek how these types of weights change the transport properties of the graph. The definition of a quantum state after some time $t$ will be given
\begin{equation}
    \ket{\psi(t)}=\sum_{x= L}^{R}\psi(x,t)\ket{x},
\end{equation}
with $R\rightarrow\infty$ and $L\rightarrow -\infty$. We aim to find an analytical result for the coefficients $\psi(x,t)$ in order to have a full description of the system.

%\textcolor{blue}{
%\begin{enumerate}
%    \item Você pode começar definindo pela equação de schrödinger (eq. 2 em \cite{Razzoli2021}) e que o hamiltoniano é diretamente associado ao laplaciano, mas também com a matriz de adjacência, a qual iremos utilizar
%    \item Depois você defina qual é a solução da equação, mostrando o unitário – lembrando que é para o caso onde o hamiltoniano é independente do tempo
%    \item Gráfico da dinâmica da linha infinita orientada com alguns valores para orientação, e também com condições iniciais distintas
%\end{enumerate}
%}

\begin{figure}[!h]
\centering
\includegraphics[width=14cm]{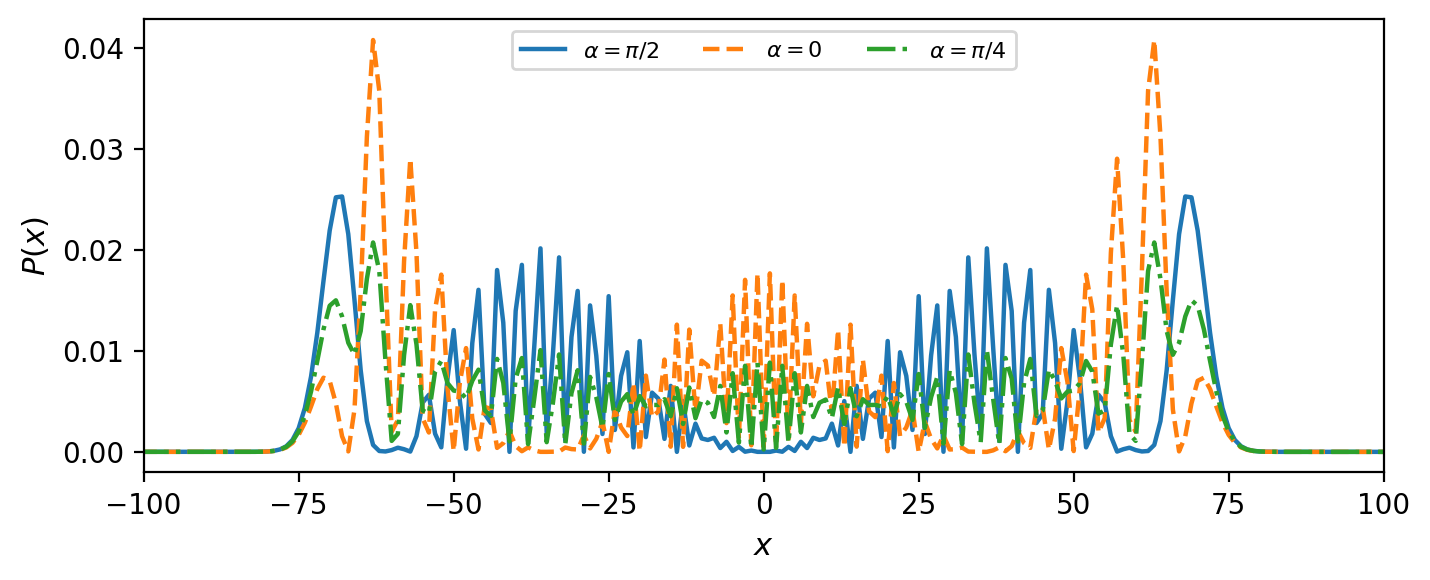}
\caption{Dynamics of directed quantum walk on an infinite line with different angles and $t=35$.}
\label{fig:dynamics}
\end{figure}

\subsection{Bessel Functions}\label{ss:bf}

Bessel functions of the first kind, $J_n(x)$, are solutions to Bessel's differential equations and have numerous representations and relations. They are widely used in physics and the definition most useful for us will be:
\begin{equation}
    J_n(x) = \frac{1}{\pi}\int_{0}^{\pi}\cos{(n\tau - x\sin{\tau})}d\tau = \frac{1}{2\pi}\int_{-\pi}^{\pi} \e^{\ii(n\tau - x\sin\tau)}d\tau.
\end{equation}
The convergence radius is infinity for all Bessel functions \cite{Lebedev1972} and an oscillatory behavior that will dictate the dynamics of the infinite line.

Throughout this paper, we will need some properties from the Bessel equations. The first property shows how to define a function based on these questions, defined as
\begin{equation}\label{eq:tildejn}
    \Tilde{J}_n(x) = \ii^n J_n(x),
\end{equation}
and it will be very present in our demonstrations. We have the recurrence relations
\begin{equation}\label{eq:jnjn1}
     J_{n+1}(x) + J_{n-1}(x)=\frac{2n}{x}J_n(x),
\end{equation}
and
\begin{equation}
J_{n+1}(x) - J_{n-1}(x) = 2 J^{'}_n(x),
\end{equation}
Finally, we will also need 
\begin{equation}\label{eq:relation3}
    J_{-n}(x) = (-1)^n J_n(x).
\end{equation}

% \textcolor{blue}{
% \begin{enumerate}
%     \item Acredito que só falta colocar mais algumas relações que usamos mais pra frente
% \end{enumerate}
% }

\section{Transport Properties Analysis}\label{s:tpa}

This section starts in subsection \ref{subsec:distribution} where the time evolution of the quantum walk on an infinite line is obtained analytically. The result is used to obtain the survival probability in subsection \ref{subsec:survival} and the standard deviation in subsection \ref{subsec:stddeviation}.

%\textcolor{blue}{
%\begin{enumerate}
%    \item Escrever a organização dessa seção. Primeira parte com o cálculo das distribuições. Segunda parte com o decaimento com condição inicial não-localizada, mais o survival probability. Terceira parte o cálculo da média e desvio padrão, caso possível, mas acho qeu dá.
%\end{enumerate}
%}

\subsection{Distributions of Directed Quantum Walk}\label{subsec:distribution}

A better description of the Hamiltonian spectral decomposition will be required to find the time evolution of the CTQW. Hence, the proof starts with the following lemma due to Trench\cite{Trench1985}:
\begin{lemma}\label{lemma:eignVV}
Define a band Toeplitz matrix of dimension $(n-1)\times(n-1)$, $T_k$, with $k<n$,
\begin{equation}
    \begin{bmatrix}
        0&\dots&a_k&&&\\
        \vdots&\ddots&&\ddots&&\\
        a_{-k}&&\ddots&&\ddots&\\
        &\ddots&&\ddots&&a_k\\
        &&&a_{-k}&\dots&0
    \end{bmatrix}.
\end{equation}
If $k=1$ and $a_k=e^{i\alpha}=a^*_{-k}$, where $\alpha\in\mathbb{R}$, then its eigenvalues will be given by
\begin{equation}
    \lambda_k = 2\cos\bigg(\frac{k\pi}{n}\bigg)\qquad k=1,\dots,n-1
\end{equation}
and the corresponding eigenvectors will be
\begin{equation}
    \ket{\psi_k}=\sqrt{\frac{2}{n}}\sum_{x=L+1}^{R-1}e^{-i\alpha x}\sin\bigg(\frac{k\pi x}{n}\bigg)\ket{x}
\end{equation}
\end{lemma}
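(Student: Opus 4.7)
The plan is to reduce the problem to the well-known spectrum of the standard real tridiagonal Toeplitz matrix by absorbing the phases $e^{\pm \ii\alpha}$ into a diagonal gauge transformation, and then read off the eigenvectors of $T_1$ by transforming back.

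First I would define the diagonal unitary $U$ with entries $U_{xx} = \e^{-\ii\alpha x}$, where $x$ ranges over the row/column indices of $T_1$ (so $x = L+1,\dots,R-1$ with $R - L = n$). A direct computation gives
\begin{equation*}
  (U^\dagger T_1 U)_{x,x+1} = \e^{\ii\alpha x}\,\e^{\ii\alpha}\,\e^{-\ii\alpha(x+1)} = 1,
  \qquad
  (U^\dagger T_1 U)_{x,x-1} = \e^{\ii\alpha x}\,\e^{-\ii\alpha}\,\e^{-\ii\alpha(x-1)} = 1,
\end{equation*}
so $U^\dagger T_1 U = M$, where $M$ is the real symmetric tridiagonal $(n-1)\times(n-1)$ Toeplitz matrix with zero diagonal and unit off-diagonals. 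Since $U$ is unitary, $T_1$ and $M$ are isospectral and their eigenvectors differ by the action of $U$.

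Second, I would invoke (or give the one-line verification of) the textbook spectrum of $M$: its eigenvalues are $\lambda_k = 2\cos(k\pi/n)$ with eigenvectors $\phi_k(x) = \sqrt{2/n}\,\sin(k\pi x/n)$ for $k=1,\dots,n-1$. The eigenvalue equation $(M\phi_k)(x) = \phi_k(x+1) + \phi_k(x-1) = \lambda_k \phi_k(x)$ follows immediately from the sum-to-product identity
\begin{equation*}
  \sin\!\bigl(k\pi(x+1)/n\bigr) + \sin\!\bigl(k\pi(x-1)/n\bigr) = 2\cos(k\pi/n)\sin(k\pi x/n),
\end{equation*}
while the Dirichlet conditions $\sin(k\pi x/n)=0$ at the virtual sites $x=L=0$ and $x=R=n$ make the identity compatible with $M$ at the two boundary rows. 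Orthonormality of $\{\phi_k\}$ is the standard discrete sine orthogonality relation.

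Third, I would conclude: the eigenvectors of $T_1$ are $\ket{\psi_k} = U\ket{\phi_k}$, which yields exactly the claimed
\begin{equation*}
  \ket{\psi_k} \;=\; \sqrt{\tfrac{2}{n}}\sum_{x=L+1}^{R-1} \e^{-\ii\alpha x}\sin\!\bigl(k\pi x/n\bigr)\ket{x},
\end{equation*}
with the same eigenvalue $\lambda_k = 2\cos(k\pi/n)$. Unitarity of $U$ preserves normalization and orthogonality, so the $\{\ket{\psi_k}\}_{k=1}^{n-1}$ form a complete orthonormal eigenbasis, which is all the lemma asserts.

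The only real obstacle is bookkeeping rather than mathematics: one has to align the index conventions ($L$, $R$, and the $(n-1)\times(n-1)$ size forcing $R-L=n$) so that the boundary rows of $T_1$ correspond to nodes just inside the vanishing sines, and check the direction of the gauge (choosing $U_{xx} = \e^{-\ii\alpha x}$ rather than $\e^{+\ii\alpha x}$) so that the phases $\e^{\pm\ii\alpha}$ cancel in the right sense. Once the gauge transformation is fixed, the content is entirely the classical spectral theorem for real tridiagonal Toeplitz matrices.
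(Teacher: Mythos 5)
Your proof is correct, but it is worth noting that the paper does not actually prove this lemma at all: it states the result, attributes it to Trench, and closes with a bare \qed. So your gauge-transformation argument is not a variant of the paper's proof --- it is a self-contained replacement for a citation. The argument itself is sound: conjugating $T_1$ by the diagonal unitary $U_{xx}=\e^{-\ii\alpha x}$ does strip the phases off both off-diagonals (your two one-line computations check out), reducing everything to the classical spectrum of the real tridiagonal Toeplitz matrix, whose eigenpairs follow from the sum-to-product identity and the Dirichlet conditions at the virtual sites. Transforming back by $U$ gives exactly the phased sine eigenvectors in the statement, with normalization preserved by unitarity. What your route buys, beyond making the lemma self-contained, is that it makes transparent \emph{why} the eigenvalues are independent of $\alpha$ and why the direction enters only as a local phase $\e^{-\ii\alpha x}$ on the eigenvectors --- which is precisely the structural fact the rest of the paper exploits in Theorem~\ref{theorem:besselCtqw}. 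The one piece of bookkeeping you flag is real: the lemma's formula $\sin(k\pi x/n)$ with $x$ running from $L+1$ to $R-1$ is only consistent with the standard eigenvectors under the convention $L=0$, $R=n$ (the paper itself silently switches to $\sin\bigl(k\pi(x-L)/n\bigr)$ inside the proof of the theorem), and you handle this correctly by placing the vanishing sines at the virtual boundary sites.
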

\qed

The adjacency matrix of a path graph can be described by a Toeplitz matrix like the one above with $k=1$ and, with this knowledge, the main theorem follows
\begin{theorem}
\label{theorem:besselCtqw}
Let $J_n(t)$ denote the n-th Bessel function of the first kind and $\Tilde{J}_n(z)=i^nJ_n(z)$. The continuous-time quantum walk for an infinite line and start point $\ket{\psi(0)}=\ket{x_0}$ has the coefficients:
\begin{equation}
    \psi(x,t) = e^{-i\alpha(x-x_0)}\Tilde{J}_{x-x_0}(2t)
\end{equation}
\end{theorem}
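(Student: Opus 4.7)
The plan is to use Lemma~\ref{lemma:eignVV} as a finite-size template for the infinite line: diagonalise $H$ on a path of length $n-1$, write the transition amplitude via the spectral decomposition, and then pass to the limit $n\to\infty$ to get an integral that can be recognised as a Bessel function. The factorisation $\psi(x,t)=e^{-i\alpha(x-x_0)}\tilde J_{x-x_0}(2t)$ strongly suggests that the $\alpha$-dependence should come out as an overall phase carried by the eigenvector weights $e^{-i\alpha x}$ and $e^{i\alpha x_0}$, while the Bessel function should come from an integral over the band of dispersion $\lambda(\theta)=2\cos\theta$.

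Concretely, first I would write
\begin{equation*}
\psi(x,t)=\bra{x}e^{-iHt}\ket{x_0}=\sum_{k=1}^{n-1}e^{-i\lambda_k t}\langle x|\psi_k\rangle\langle\psi_k|x_0\rangle,
\end{equation*}
and substitute the eigenpair from Lemma~\ref{lemma:eignVV}. The phase prefactor $e^{-i\alpha(x-x_0)}$ immediately pulls out of the sum, leaving
\begin{equation*}
\frac{2}{n}\sum_{k=1}^{n-1}e^{-2it\cos(k\pi/n)}\sin\!\left(\frac{k\pi x}{n}\right)\sin\!\left(\frac{k\pi x_0}{n}\right),
\end{equation*}
which is a Riemann sum in $\theta=k\pi/n$ with step $\Delta\theta=\pi/n$. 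Second, I would send $R\to\infty$ and $L\to-\infty$, obtaining
\begin{equation*}
\frac{2}{\pi}\int_0^\pi e^{-2it\cos\theta}\sin(\theta x)\sin(\theta x_0)\,d\theta.
\end{equation*}
Third, I would apply $\sin A\sin B=\tfrac{1}{2}[\cos(A-B)-\cos(A+B)]$, and identify each term with the integral representation of $J_n$ from Section~\ref{ss:bf}, producing an expression of the form $i^{x-x_0}J_{x-x_0}(2t)$ plus a boundary-image term proportional to $i^{x+x_0}J_{x+x_0}(2t)$, whose collapse in the infinite-line limit gives the stated result.

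The hard part will be the limit step itself, because the eigenbasis of Lemma~\ref{lemma:eignVV} has Dirichlet boundary conditions and therefore carries a ``mirror charge'' contribution $J_{x+x_0}(2t)$ that is not present on the true line. The cleanest way to discard it is to translate $x_0$ into the interior so that $x+x_0\to\infty$ with $x-x_0$ held fixed, using the decay of $J_n(z)$ in the order $n$ to kill the image term. An attractive alternative, which I would keep in mind as a fallback, is to sidestep the lemma entirely and diagonalise $H$ directly on the line by the Fourier transform $\ket{k}=\sum_x e^{-ikx}\ket{x}$, giving $H\ket{k}=2\cos(k)\ket{k}$; then $\psi(x,t)=\frac{1}{2\pi}\int_{-\pi}^{\pi}e^{-2it\cos k-ik(x-x_0)}\,dk$, and the Bessel integral representation finishes the proof without any reflection subtleties, modulo carrying the $e^{-i\alpha(x-x_0)}$ phase through a gauge-like substitution at the start.
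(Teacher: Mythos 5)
Your proposal follows essentially the same route as the paper: spectral decomposition of the finite Toeplitz path via Lemma~\ref{lemma:eignVV}, extraction of the phase $e^{-i\alpha(x-x_0)}$, product-to-sum on the sines, passage from the eigenvalue sum to the Bessel integral representation, and elimination of the Dirichlet ``mirror'' term $\tilde J_{x+x_0-2L}(2t)$ in the limit $L\to-\infty$ using the decay of $J_n$ in its order. The only cosmetic differences are that the paper phrases the sum-to-integral step as a Newton--Cotes quadrature with an explicit $\mathcal{O}(1/N^2)$ error bound rather than a Riemann sum, and your Fourier-transform fallback is not used.
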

\begin{proof}
    The coefficients $\psi(x,t)$ will be given by
    \begin{equation*}
        \braket{x}{\psi(t)}=e^{itT_1}\braket{x}{\psi(0)}.
    \end{equation*}
    From lemma \ref{lemma:eignVV} we can use the spectral decomposition of $T_1$ to obtain
    \begin{equation*}
        \psi(x,t) = \frac{2}{n}\sum_{k=1}^{n-1}e^{-i\alpha x}\sin\bigg(\frac{k\pi(x-L)}{n}\bigg)e^{it2\cos{(\frac{k\pi}{n})}}\sum_{y=L+1}^{R-1}e^{-i\alpha y}\sin\bigg(\frac{k\pi(y-L)}{n}\bigg)\psi(y,0).
    \end{equation*}
    Set $\psi(y,0) = \delta_{x,x_0}$, then
    \begin{equation*}
        \psi(x,t) = \frac{e^{-i\alpha(x-x_0)}}{n}\bigg(\sum_{k=1}^{n-1}e^{-it2\cos{(\frac{k\pi}{n})}}\cos\bigg(\frac{k\pi(x-x_0)}{n}\bigg)-\sum_{k=1}^{n-1}e^{-it2\cos{(\frac{k\pi}{n})}}\cos\bigg(\frac{k\pi(x+x_0-2L)}{n}\bigg)\bigg)
    \end{equation*}
    Notice that the terms with $k=0$ and $k=n$ vanish, hence we can add them to the sums without loss of generality. After the addition, the sum can be changed to an integral by using Newton-Cotes quadrature. The error of the approximation, $\epsilon(x,x_0,t,n)$, is
    \begin{equation*}
        \epsilon(x,x_0,t,n)\leq\mathcal{O}\bigg(\frac{1}{N^2}\norm{f''(x)}_\infty\bigg),
    \end{equation*}
    where $\norm{f(x)}_\infty$ is the supremum of $f(x)$. Since we are in the interval $[0,1]$, the supremum will always be bounded since the function $f''(x)$ will only have complex exponentials and cosines. The coefficients can now be rewritten as
    \begin{align*}
        \psi(x,t) = & e^{-i\alpha(x-x_0)}\bigg(\int_{0}^{1}\cos{\big(\pi(x-x_0)\phi\big)}e^{i2t\cos{\pi\phi}}d\phi-\int_{0}^{1}\cos{\big(\pi(x+x_0-2L)\phi\big)}e^{i2t\cos{\pi\phi}}d\phi\bigg)+\\&+\epsilon(x,x_0,t,n).
    \end{align*}
    Using \ref{eq:tildejn} we get that
    \begin{equation}
        \psi(x,t) = e^{-\alpha(x-x_0)}\big[\Tilde{J}_{x-x_0}(2t)-\Tilde{J}_{x+x_0-2L}(2t)\big]+\epsilon(x,x_0,t,n).
    \end{equation}
    By setting $R\rightarrow\infty$ and $L\rightarrow -\infty$ we see that the error terms go to zero while one of the Bessel functions goes to zero too. We finally get to our result:
    \begin{equation}
        \psi(x,t) = e^{-i\alpha(x-x_0)}\Tilde{J}_{x-x_0}(2t).
    \end{equation}
\end{proof}

Bessen\cite{Bessen2006} proved a similar result for undirected infinite line graphs where he finds that
\begin{equation}
    \psi(x,t) = \tilde{J}_{\abs{x-x_0}}(2t).
\end{equation}
We can see that the addition of complex roots of unity as weights, introduces relative phases proportional to $\alpha$ between the states associated with each node. 

\subsection{Survival Probability}\label{subsec:survival}
The survival probability of a quantum walk can be characterized as the mean
probability of finding the walker in a certain location after some time
$t$. Considering the symmetric position range of $[k_0,k_1]$, this quantity will
be
\begin{equation}
    P_{[k_0,k_1]}(t)=\sum_{i=k_0}^{k_1} P_{i}(t).
\end{equation}
Choosing initial conditions such that $P_{[k_0,k_1]}(0) = 1$, we can evaluate
the \textit{decay rate} of the survival probability, which indicates how fast the walker leaves the specified range of positions. In a classical random walk the decay rate scales with $t^{-\frac{1}{2}}$, and, typically, with $t^{-1}$ in a quantum walk. However, as shown by \cite{Abal2006}, certain non-local initial conditions can further increase this decay rate, where the survival probability decreases with $t^{-3}$. Here, we will show that this behavior can be recovered for a more general model of continuous-time quantum walks in directed graphs.\par
\begin{figure}[!h]
\centering
\includegraphics[scale=0.35]{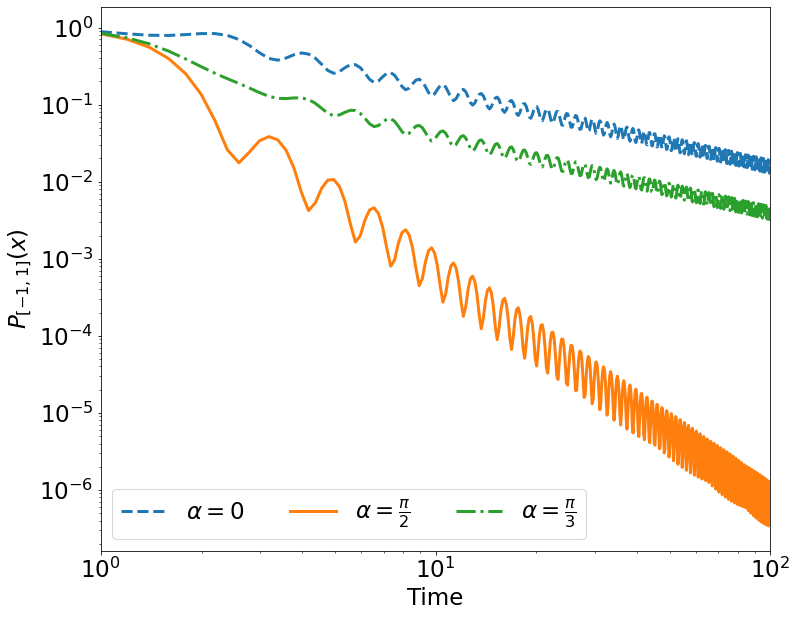}
    \caption{Evolution of the survival probability on a directed infinite line, for $\theta = \frac{\pi}{4}$, $\gamma = 0$, and $\alpha = 0, \frac{\pi}{2},\frac{\pi}{3}$.}
\label{fig:oriented_cycle}
\end{figure}

We define a non-localized initial condition as
\begin{equation}
    \label{eq:survProbInitCond}
    \ket{\psi(0)} = \cos(\theta)\ket{-k} + e^{i\gamma}\sin(\theta)\ket{k},
\end{equation}
where $k \in \mathbb{Z}$. The wave function is then given by 
\begin{equation}
    \psi(x,t) = \cos(\theta) e^{-i\alpha(x+k)}\Tilde{J}_{x+k}(2t)+
    e^{i\gamma}\sin{\theta} e^{-i\alpha(x-k)}\Tilde{J}_{x-k}(2t),
\end{equation}
which leads to the general equation for the probability associated with this initial condition 
\begin{equation}
\begin{aligned}
    P_k(x,t) &= \psi(x,t)\psi(x,t)^{*} \\ 
    &=\cos^2(\theta) J_{x+k}^2(2t) + \sin^2(\theta) J_{x-k}^2(2t)\\
    &\quad+ 2(-1)^k  \cos(2\alpha k +\gamma)  \cos(\theta)\sin(\theta) J_{x+k}(2t)J_{x-k}(2t).
    \end{aligned}
    \label{eq:generalBesselProb}
\end{equation}
Considering a value of $k = 1$ and $\theta = \frac{\pi}{4}$, figure \ref{fig:oriented_cycle} shows how the survival probability decays over time. For a value of $\alpha = \frac{\pi}{2}$ the enhanced decay rate is observed, whereas other values of $\alpha$ display a normal decay rate.\par

The enhanced decay rate can then be analytically obtained from equation \eqref{eq:generalBesselProb} by considering
\begin{equation}
    (-1)^k\cos(2\alpha k + \gamma) = 1,
\end{equation}
and the value of $\alpha$ will be
\begin{equation}
    \label{eq:enhancedAlpha}
    \alpha=\begin{dcases}\frac{2 \pi v - \gamma}{2 k}, \text{even } k, \\ 
                        \frac{\pi+2 \pi v-\gamma}{2 k}, \text { odd } k, 
                        \end{dcases}
\end{equation}
where $v \in \mathbb{Z}$. The probability will be, after some algebraic manipulation, as follows

\begin{equation}
        P_k(x,t) = \left[\cos \theta J_{x+k}(2t) + \sin \theta J_{x-k}(2t)\right]^{2}.
\end{equation}
Considering a value of $k=1$ and $\theta = \frac{\pi}{4}$
\begin{equation}
        P_1(x,t) =  \frac{1}{2}\left[J_{x+1}(2 t)+J_{x-1}(2 t)\right]^2 =2 x^{2}\left[\frac{J_{x}(2t)}{2t}\right]^{2} \sim \frac{1}{t^{3}},
\end{equation}
thus confirming these parameters do indeed lead to the probability of finding the walker in the survival region decreasing with an enhanced rate.\par

Analogously, we can recover the normal decay rate by changing the value of $\alpha$ to
\begin{equation}
    \label{eq:normalAlpha}
    \alpha=\begin{dcases}\frac{2 \pi v - \gamma}{2 k}, \text{odd } k, \\ 
                        \frac{\pi+2 \pi v-\gamma}{2 k}, \text { even } k. 
                        \end{dcases}
\end{equation}
The associated probability will then be 
\begin{equation}
        P_k(x,t) = \left[\cos \theta J_{x+k}(2t) - \sin \theta J_{x-k}(2t)\right]^{2}.
\end{equation}
Once more, if we take a value of $k=1$ and $\theta = \frac{\pi}{4}$
\begin{equation}
P_1(x,t)=\frac{1}{2}\left[J_{x+1}(2t) - J_{x-1}(2t)\right]^2=2\left[\frac{x J_{x}(2t)}{\tau}-J_{x-1}(\tau)\right]^{2} \sim \frac{1}{t},
\end{equation}
and we recover the normal decay rate of the quantum walk.\par

This difference in behavior is explained by an interference effect influenced, in part, by the relative phase of the initial state. In this section, we show that introducing direction to the line graph also affects this interference effect, and we can reproduce enhanced and normal decay rates for the given initial condition by choosing the appropriate value of $\alpha$, regardless of the parity of $k$. Then, no nodes admit localization since the decay rate of right-hand side of for every vertex is O(1/t)

\subsection{Mean and Standard Deviation}\label{subsec:stddeviation}

The mean value, or the first moment, of a distribution for a quantum walk, indicates how the walker moves as a wave function. This transport property can be defined, in our study case, as the equation
\begin{equation}
    \langle x \rangle_t = \sum x\,P_k(x,t),
\end{equation}
where $P_k(x,t)$ is defined by equation \eqref{eq:generalBesselProb} describing the probability distribution depending on $k$. Considering the first moment definition, the considered distribution, and relation \eqref{eq:relation3}, we find the analytical expression for the first moment as
\begin{equation}
    \langle x \rangle_t = -k(\cos^2{\theta} - \sin^2{\theta}) = -k\,\cos{2\theta}.
\end{equation}
This result points out that the first moment does not depend on direction, time, or the phase factor of the initial condition in equation \eqref{eq:initialcondition} – and achieves its maximum value with $\theta= n\pi$ for $n \in \mathbb{Z}$.

Now, we will calculate the standard deviation for this model of the quantum walk, since this transport property is crucial to understanding how the wave spreads in time. For this reason, we introduce the definition of the second moment as
\begin{equation}
    \langle x^2 \rangle_t = \sum x^2\,P(x,t),
\end{equation}
and the standard deviation is written as
\begin{equation}
    \sigma = \sqrt{\langle x^2 \rangle_t -  \langle x \rangle_t^2}.
\end{equation}
Since we have calculated the first moment, we just have to find the second moment. In the same way, we find that
\begin{equation}
    \langle x^2 \rangle_t = 2t^2 + k^2,
\end{equation}
considering $k \neq 1$. In this case, the standard deviation will be written as
\begin{equation}
    \sigma = \sqrt{2t^2 + k^2(1 + \cos^22\theta)} \sim t,
\end{equation}
and slightly depends on $\theta$ and $k$, and presents a ballistic behavior.

\section{Conclusion}\label{s:c}
In theorem \ref{theorem:besselCtqw} we describe the analytical formulation with Bessel functions for the continuous-time quantum walk in a directed line graph. Introducing complex weights to the structure implies relative phases between the coefficients of the wave function, which are still governed by Bessel functions as in the undirected case \cite{Bessen2006} with an extra term to account for graph direction.\par

Our findings expand upon the work of \cite{Abal2006}, where the decay rate depends exclusively on the initial condition. Here, we show how to control the interference patterns by adjusting the value $\alpha$ depending on the parity of $k$ as described by equations \eqref{eq:enhancedAlpha} and \eqref{eq:normalAlpha}. This allows us to recover the normal and enhanced decay rate effects of the survival probability for an arbitrary initial condition. We also find that graph direction does not impact the values of the moments and the standard deviation, they will only depend on the symmetric initial condition of equation \eqref{eq:survProbInitCond}.

In future work, we intend to generalize the initial condition and study how it affects the transport properties of the continuous-time quantum walk. We would also like to find analytical expressions for other classes of graphs, to find if they still exhibit enhanced decay rate, and how would this model apply to other transport properties such as perfect state and transport efficiency.

\section*{Funding}
This study was financed by the Coordenação de Aperfeiçoamento de Pessoal de Nível Superior – Brasil (CAPES) – Finance Code 001.

\bibliographystyle{acm}
\bibliography{biblio}

\clearpage

\end{document}